\newcommand{\ralewayfont}{\sffamily\bfseries} 
\renewcommand{\@fnsymbol}[1]{\ensuremath{\ifcase#1\or 1\or 2\or 3\or 4\or 5\or 6\or 7\or 8\or 9\else *\fi}}
\renewcommand{\thesection}{\arabic{section}.}
\renewcommand{\thesubsection}{\thesection\arabic{subsection}.}
\renewcommand{\thesubsubsection}{\thesubsection\arabic{subsubsection}.}
\normalfont\fontsize{14pt}{16pt}\selectfont\bfseries}
\normalfont\fontsize{12pt}{14pt}\selectfont\bfseries}
\normalfont\fontsize{12pt}{14pt}\selectfont\bfseries}
\newtheorem{theorem}{Theorem}
\newtheorem{lemma}{Lemma}
\newtheorem{defn}{Definition}
\newtheorem{obs}{Observation}
\date{\today}
\begin{document}

\vspace*{.25em}  

\begin{center}
  {\fontsize{18pt}{22pt}\selectfont\bfseries A Theoretical Framework Bridging Model Validation and Loss Ratio in Insurance}

  \vspace{0.8em}

  {\ralewayfont
    C. Evans Hedges\footnotemark[1]
  }
\end{center}

\footnotetext[1]{\ Lemonade Insurance}

\noindent Keywords: Model validation, Loss ratio, Insurance pricing, Business impact, Correlation, Elasticity

{\ralewayfont
\begin{center}
\textbf{Abstract}
\end{center}

\begin{adjustwidth}{5em}{5em}
This paper establishes the first analytical relationship between predictive model performance and loss ratio in insurance pricing. We derive a closed-form formula connecting the Pearson correlation between predicted and actual losses to expected loss ratio. The framework proves that model improvements exhibit diminishing marginal returns, analytically confirming the actuarial intuition to prioritize poorly performing models. We introduce the Loss Ratio Error metric for quantifying business impact across frequency, severity, and pure premium models. Simulations show reliable predictions under stated assumptions, with graceful degradation under assumption violations. This framework transforms model investment decisions from qualitative intuition to quantitative cost-benefit analysis.
\end{adjustwidth}}


\section{Introduction}

In insurance pricing, predictive models are routinely evaluated using statistical performance metrics such as Gini coefficient, AUC, and Mean Absolute Percentage Error (MAPE). These metrics are indispensable for assessing model fit, yet they remain fundamentally disconnected from the business outcomes that ultimately matter to insurers, such as profitability and loss ratio. While prior research has documented that more predictive models tend to yield better portfolio performance through improved risk segmentation, lower adverse selection, and competitive advantage, these findings have been empirical or heuristic in nature. It was shown in \citet{frees2014insurance} that Gini index is proportional to the correlation between the relativity and an out-of-sample profit. However, to our knowledge this is the only work that establishes a direct relationship between a model validation metric and loss ratio.

Translating model performance into business outcomes (profit, loss ratio, growth) is a non-trivial challenge noted throughout the literature. \citet{walch2022market} provides a direct precedent for our problem framing by comparing different pricing models in commercial auto, showing empirically that better predictive models deliver both improved loss ratios and market share, but without providing analytic formulas. Industry case studies from vendors such as Earnix, Deloitte, and WTW often claim measurable loss ratio improvement from better predictive modeling \citep{earnix2023predictive,deloitte2022insurance,wtwco2023modeling,insurity2024predict}. These studies demonstrate practitioner demand for quantifiable business benefits, though their analyses remain anecdotal or simulation-based. 

Despite rich literature on insurance demand elasticity \citep{guven2013price,sherden1984auto,hao2018insurance,mossin1968aspects,rothschild1976equilibrium}, actuarial model validation practices \citep{cas2019model,lorentzen2022gini}, and insurance pricing optimization \citep{buhlmann1967experience}, we find a critical gap at their intersection: no existing work provides a closed-form, quantitative connection between a model validation metric and an insurance business outcome. Prior research establishes that better predictive models can improve profitability, and it offers tools to evaluate models as well as tools to optimize pricing using those models. Yet, none of these strands has yielded an analytical framework that starts from a metric like correlation and ends with a precise estimate of percentage improvement in loss ratio.

We address this gap by developing a mathematical framework that establishes a direct relationship between a loss model's statistical performance and an insurer's expected loss ratio under clearly stated assumptions. The formula depends solely on model correlation, demand elasticity, and the coefficient of variation of the loss distribution. Using this framework, we prove that marginal improvements in poorly performing models have an outsized impact on loss ratio compared to equivalent improvements in well-performing models which confirms the widespread existing intuition around which predictive modeling efforts will yield the greatest business benefits.

Our framework leads to a natural new metric for evaluating model performance, the Loss Ratio Error ($E_{LR}$). This metric quantifies the expected loss ratio degradation caused by poor model performance and can be computed at the level of severity, frequency, or pure premium. With this metric, as well as the framework for estimating expected loss ratio, we can now quantify the expected business impact of model improvements. This moves the conversation from qualitative ("a better model should lower our loss ratio") to quantitative and objective ("a model with Pearson correlation of 0.3 vs 0.2 corresponds to a specific percentage improvement in loss ratio, all else equal") providing a crucial bridge between technical modeling teams and non-technical business decision makers.

To validate the robustness of the theoretical work, we conduct comprehensive simulation studies using grid-based validation and systematic assumption violation testing, demonstrating reliable performance under stated assumptions and graceful degradation under moderate violations. 

The core contributions of the following work are threefold. First, we derive a closed-form formula connecting Pearson correlation between predicted and actual losses to expected loss ratio performance. Second, we provide a proof that model improvements exhibit diminishing marginal returns, analytically supporting the widespread actuarial intuition to prioritize improvements in poorly performing models. Third, we establish practical tools including the Loss Ratio Error metric that directly measures the expected loss ratio degradation caused by poor model performance.

\section{Mathematical Framework}

In this section, we will outline the core assumptions of our framework and derive the core loss ratio formula.

\subsection{Primary Assumptions}\label{sec:assumptions}

We denote the true expected loss for a given customer by $\lambda$. We make no assumption about the distribution of $\lambda$, other than that it is positive. While realized losses in any given policy period may be zero for customers who do not file claims, the expected loss $\lambda$ represents the underlying risk and is strictly positive for all customers. We assume our loss modeling provides estimated losses for a given customer by $\hat{\lambda} = \lambda e^{\epsilon}$ where $\epsilon \sim N(0, \sigma^2)$. In particular, we assume that $\epsilon \perp \lambda$ and that $\epsilon$ is homoskedastic in log space. This log-normal error structure is standard in actuarial modeling \citep{frees2010regression,ohlsson2010non,goldburd2016generalized} and is the natural assumption for GLM models with log-link functions commonly used in insurance pricing \citep{antonio2007use,wuthrich2022statistical}. The independence assumption $\epsilon \perp \lambda$ is fundamental to well-specified predictive models and is routinely validated in actuarial practice \citep{cas2019model}.

The log-normal multiplicative error structure is particularly realistic for insurance applications where prediction errors tend to be proportional to the underlying risk level—models typically exhibit larger absolute errors for high-risk customers while maintaining reasonable relative accuracy. The independence assumption $\epsilon \perp \lambda$ has a natural economic interpretation: it requires that model errors be uncorrelated with the true underlying risk. Violations of this assumption indicate systematic model bias that actuaries routinely test for and correct during model validation.

We assume that our pricing plan can be simplified to a margin factor $M$ so that a given customer with estimated losses $\hat{\lambda}$ will be priced at $p = M \hat{\lambda}$. This proportional pricing structure reflects a cost-plus-margin approach widely used across the industry \citep{klugman2012loss,denuit2007actuarial}. 

For a given product line, we model demand using a power law conversion model. We let $c(p)$ denote the probability of a customer purchasing a policy at price $p$. We assume that $c(p) \propto p^{-\eta}$ where $\eta > 0$ is the price elasticity. This iso-elastic demand specification has strong theoretical foundations in insurance economics: \citet{mossin1968aspects} demonstrated that under expected utility theory, risk-averse consumers will purchase only partial insurance when premiums exceed actuarially fair prices, yielding downward-sloping demand curves. Power-law demand curves with constant elasticity emerge naturally from utility models with constant relative risk aversion and align with empirical evidence across insurance lines \citep{sherden1984auto,babbel1985price,pauly2003price}.

\subsection{Core Loss Ratio Formula}

We begin by establishing a formula for the expected portfolio loss ratio. 

\begin{lemma}\label{lemma:loss_ratio} For large portfolios, the expected loss ratio is given by:
$$LR = \frac{E[\lambda \cdot c(p)]}{E[p \cdot c(p)]}$$
where the expectation is taken over the population distribution of potential customers.
\end{lemma}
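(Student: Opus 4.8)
The plan is to realize the book of business as the set of purchasers drawn from a large pool of $N$ independent potential customers. Each potential customer $i$ carries a risk level $\lambda_i$ and error $\epsilon_i$ drawn i.i.d.\ from the population distribution, which fixes a deterministic price $p_i = M\lambda_i e^{\epsilon_i}$. I would attach a purchase indicator $B_i \sim \mathrm{Bernoulli}(c(p_i))$, independent across customers given their characteristics, and write $L_i$ for the realized loss with $E[L_i \mid \lambda_i] = \lambda_i$. Then the total premium is $\Pi_N = \sum_{i=1}^N B_i p_i$, the total incurred loss is $\Lambda_N = \sum_{i=1}^N B_i L_i$, and the portfolio loss ratio is exactly $\Lambda_N / \Pi_N$.

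Next I would compute per-customer expectations by conditioning on $(\lambda_i, \epsilon_i)$, under which $p_i$ is deterministic: this gives $E[B_i p_i] = p_i\, c(p_i)$ and $E[B_i L_i] = c(p_i)\, E[L_i \mid \lambda_i] = \lambda_i\, c(p_i)$. Averaging over the population yields $E[B_i p_i] = E[p \cdot c(p)]$ and $E[B_i L_i] = E[\lambda \cdot c(p)]$. Since the summands are i.i.d.\ (and, assuming the mild integrability $E[\lambda\, e^{\epsilon}] < \infty$, have finite mean), the strong law of large numbers gives $\tfrac1N \Pi_N \to E[p \cdot c(p)]$ and $\tfrac1N \Lambda_N \to E[\lambda \cdot c(p)]$ almost surely.

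The final step is to pass to the ratio. Because $\lambda > 0$ and $c(\cdot) > 0$, the denominator limit $E[p \cdot c(p)]$ is strictly positive, so the continuous mapping theorem gives $\Lambda_N / \Pi_N \to E[\lambda \cdot c(p)] / E[p \cdot c(p)]$ almost surely. This limit is the asserted formula, and the qualifier ``for large portfolios'' is exactly the statement that the realized loss ratio concentrates at this value as $N \to \infty$.

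The main obstacle I expect is resisting the temptation to equate the expected ratio with the ratio of expectations, which fails in general: $E[\Lambda_N / \Pi_N] \neq E[\Lambda_N]/E[\Pi_N]$. The clean route is the asymptotic one above---prove almost-sure convergence of the normalized numerator and denominator separately, verify the denominator limit is bounded away from zero, and only then take the quotient. The other point requiring care is stating the independence architecture precisely (purchase decisions independent across customers, losses conditionally independent given risk) so the law of large numbers applies to each sum; this is where the framework's standing assumptions, especially $\epsilon \perp \lambda$, are invoked.
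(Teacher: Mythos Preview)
Your proof is correct and follows essentially the same route as the paper's: introduce Bernoulli purchase indicators, apply the law of large numbers to the normalized numerator and denominator separately, then pass to the ratio. Your version is a bit more careful than the paper's---you distinguish realized losses $L_i$ from expected losses $\lambda_i$, explicitly invoke the continuous mapping theorem, and verify positivity of the denominator limit---but the core argument is identical.
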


\begin{proof}
Consider a pool of $N$ potential customers indexed by $i = 1, \ldots, N$, where customer $i$ has true losses $\lambda_i$, price $p_i$, and conversion probability $c(p_i)$. Only a subset of these customers will actually purchase policies, so we will let $X_i \sim \text{Bernoulli}(c(p_i))$ indicate whether customer $i$ converts and price $p_i$.  

The actual portfolio loss ratio is the ratio of total realized losses to total realized revenue:
$$\text{Portfolio LR} = \frac{\sum_{i=1}^N \lambda_i X_i}{\sum_{i=1}^N p_i X_i}.$$
Dividing both numerator and denominator by $N$ we get:
$$\text{Portfolio LR} = \frac{\frac{1}{N}\sum_{i=1}^N \lambda_i X_i}{\frac{1}{N}\sum_{i=1}^N p_i X_i}.$$
By the law of large numbers, as $N \to \infty$:
$$\frac{1}{N}\sum_{i=1}^N \lambda_i X_i \to E[\lambda X], $$
and by the law of iterated expectations, we have: 
$$E[\lambda X] = E[E[\lambda X | \lambda]] = E[ \lambda E[X] | \lambda]  = E[ \lambda c(p) | \lambda] = E[ \lambda c(p) ] .$$
By a similar argument, we have: 
$$\frac{1}{N}\sum_{i=1}^N p_i X_i \to E[p X] = E[p \cdot c(p)]$$
Therefore, for large portfolios, the expected loss ratio is:
$$LR = \frac{E[\lambda \cdot c(p)]}{E[p \cdot c(p)]}$$
where the expectation is taken over the population distribution of potential customers.
\end{proof}

Next we prove a useful lemma relating the correlation between predicted and actual losses to the error variance and loss distribution characteristics. We begin by defining the coefficient of variation for true losses:
\begin{defn} The coefficient of variation for true losses is given by:
$$CV_\lambda = \frac{\sqrt{\text{Var}(\lambda)}}{E[\lambda]}.$$
\end{defn}

\begin{lemma}\label{lemma:correlation} The correlation between our predicted losses $\hat{\lambda}$ and actual losses $\lambda$ is given by:
$$\rho = \frac{e^{-\sigma^2/2}}{\sqrt{(1 + CV_\lambda^{-2}) - CV_\lambda^{-2} e^{-\sigma^2}}}.$$
\end{lemma}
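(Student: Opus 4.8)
The plan is to expand the Pearson correlation $\rho = \text{Cov}(\hat{\lambda}, \lambda)/\sqrt{\text{Var}(\hat{\lambda})\,\text{Var}(\lambda)}$ directly and evaluate each piece using the multiplicative error model $\hat{\lambda} = \lambda e^{\epsilon}$ together with the independence $\epsilon \perp \lambda$. The only facts about $\epsilon$ that I will need are the moment generating function values $E[e^{\epsilon}] = e^{\sigma^2/2}$ and $E[e^{2\epsilon}] = e^{2\sigma^2}$, which follow immediately from $\epsilon \sim N(0,\sigma^2)$. Independence is what makes the computation clean: every expectation of a product of a function of $\lambda$ and a function of $\epsilon$ factors into a product of expectations.

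First I would compute the covariance. Since $E[\hat{\lambda}] = E[\lambda]\,E[e^{\epsilon}] = e^{\sigma^2/2}E[\lambda]$ and $E[\hat{\lambda}\lambda] = E[\lambda^2]\,E[e^{\epsilon}] = e^{\sigma^2/2}E[\lambda^2]$, the common factor $e^{\sigma^2/2}$ pulls out, giving $\text{Cov}(\hat{\lambda},\lambda) = e^{\sigma^2/2}\big(E[\lambda^2] - E[\lambda]^2\big) = e^{\sigma^2/2}\,\text{Var}(\lambda)$. Next I would compute the variance of the prediction as $\text{Var}(\hat{\lambda}) = E[\lambda^2]\,E[e^{2\epsilon}] - E[\lambda]^2\,E[e^{\epsilon}]^2 = e^{2\sigma^2}E[\lambda^2] - e^{\sigma^2}E[\lambda]^2$, where the distinct factors $e^{2\sigma^2}$ and $e^{\sigma^2}$ come from the second moment and the squared first moment of $e^{\epsilon}$ respectively.

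Substituting $E[\lambda^2] = E[\lambda]^2(1 + CV_\lambda^2)$ and $\text{Var}(\lambda) = CV_\lambda^2\,E[\lambda]^2$ into $\rho$, the factors of $E[\lambda]^2$ cancel throughout and the expression collapses to the compact intermediate form $\rho = CV_\lambda / \sqrt{(1+CV_\lambda^2)e^{\sigma^2} - 1}$.

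The last step, and the only one requiring any care, is purely algebraic: rewriting this compact form into the stated form. I would divide numerator and denominator by $CV_\lambda\, e^{\sigma^2/2}$, which turns the numerator into $e^{-\sigma^2/2}$ and, absorbing the factor $(CV_\lambda\, e^{\sigma^2/2})^2 = CV_\lambda^2 e^{\sigma^2}$ into the radical, turns the radicand into $\frac{(1+CV_\lambda^2)e^{\sigma^2}-1}{CV_\lambda^2 e^{\sigma^2}} = (1 + CV_\lambda^{-2}) - CV_\lambda^{-2}e^{-\sigma^2}$, yielding the claimed identity. There is no genuine obstacle here—the result is fundamentally a moment computation—so the thing to watch is bookkeeping: keeping the $e^{\sigma^2/2}$ versus $e^{2\sigma^2}$ factors straight and performing the final rearrangement into the $CV_\lambda^{-2}$ form without error.
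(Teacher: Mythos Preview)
Your proposal is correct and follows essentially the same approach as the paper: both compute $\text{Cov}(\hat\lambda,\lambda)$ and $\text{Var}(\hat\lambda)$ directly from independence and the normal MGF, then substitute $CV_\lambda$ and simplify. Your intermediate compact form $\rho = CV_\lambda/\sqrt{(1+CV_\lambda^2)e^{\sigma^2}-1}$ is a slightly different (and arguably cleaner) waypoint than the paper's, but the route and all the ingredients are the same.
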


\begin{proof} We first compute $\rho$ by establishing the covariance and variance of our predicted losses and actual losses. We note: 
\begin{align*}
\text{Cov}(\hat{\lambda}, \lambda) 
  &= E[\hat{\lambda} \lambda] - E[\hat{\lambda}] E[\lambda] \\
  &= E[\lambda^2 e^{\varepsilon}] - E[\lambda e^{\varepsilon}] E[\lambda].
\end{align*}
Since $\lambda \perp \varepsilon$:
\begin{align*}
\text{Cov}(\hat{\lambda}, \lambda) 
  &= E[\lambda^2] E[e^{\varepsilon}] - E[\lambda]^2 E[e^{\varepsilon}] \\
  &= e^{\sigma^2/2} \operatorname{Var}(\lambda).
\end{align*}
Next we observe that the variance of our predicted losses is given by:
\begin{align*}
\operatorname{Var}(\hat{\lambda}) 
  &= E[\lambda^2 e^{2\varepsilon}] - E[\lambda e^{\varepsilon}]^2 \\
  &= E[\lambda^2]e^{2\sigma^2} - E[\lambda]^2e^{\sigma^2} \\
  &= \operatorname{Var}(\lambda)e^{2\sigma^2} + E[\lambda]^2(e^{2\sigma^2} - e^{\sigma^2}).
\end{align*}
Finally, we have the full correlation expression:
$$\text{Corr}(\hat{\lambda}, \lambda) = \frac{e^{\sigma^2/2}}{\sqrt{e^{2\sigma^2} + \frac{E[\lambda]^2}{\operatorname{Var}(\lambda)}(e^{2\sigma^2} - e^{\sigma^2})}}.$$

Next we substitute the coefficient of variation into the correlation expression:
\begin{align*}
\rho 
  &= \frac{e^{\sigma^2/2}}{\sqrt{e^{2\sigma^2}(1 + CV_\lambda^{-2}) - CV_\lambda^{-2} e^{\sigma^2}}} \\
  &= \frac{e^{\sigma^2/2}}{e^{\sigma^2}\sqrt{(1 + CV_\lambda^{-2}) - CV_\lambda^{-2} e^{-\sigma^2}}} \\
  &= \frac{e^{-\sigma^2/2}}{\sqrt{(1 + CV_\lambda^{-2}) - CV_\lambda^{-2} e^{-\sigma^2}}},
\end{align*}
and we arrive at the desired result.
\end{proof}

We may now derive our core formula. 

\begin{theorem}\label{theorem:loss_ratio}
Under the assumptions stated in Section \ref{sec:assumptions}, the expected loss ratio is given by:
$$LR = \frac{1}{M} \left(\frac{1 + \rho^2 CV_\lambda^{-2}}{\rho^2(1 + CV_\lambda^{-2})}\right)^{\frac{2\eta - 1}{2}}$$
where $\rho = \text{Corr}(\hat{\lambda}, \lambda)$ is the Pearson correlation between predicted and actual losses, and $CV_\lambda = \frac{\sqrt{\text{Var}(\lambda)}}{E[\lambda]}$ is the coefficient of variation for true losses.
\end{theorem}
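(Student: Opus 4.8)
The plan is to start from the loss-ratio identity in Lemma~\ref{lemma:loss_ratio} and collapse it, using the explicit pricing and demand structure, to a single function of the error variance $\sigma^2$, and then to use Lemma~\ref{lemma:correlation} to trade $\sigma^2$ for the observable quantities $\rho$ and $CV_\lambda$. First I would write the power-law demand as $c(p) = K p^{-\eta}$ for some positive constant $K$ and substitute $p = M\hat{\lambda} = M\lambda e^{\varepsilon}$ into both the numerator $E[\lambda\,c(p)]$ and denominator $E[p\,c(p)]$ of Lemma~\ref{lemma:loss_ratio}. Each one then takes the form $(\text{const})\cdot E[\lambda^{1-\eta} e^{a\varepsilon}]$, with $a=-\eta$ in the numerator and $a=1-\eta$ in the denominator.

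Next I would invoke the two structural assumptions. Independence $\varepsilon\perp\lambda$ factors each expectation as $E[\lambda^{1-\eta}]\,E[e^{a\varepsilon}]$, and the Gaussian moment generating function supplies $E[e^{a\varepsilon}]=e^{a^2\sigma^2/2}$. The step I expect to be the crux is forming the ratio: the proportionality constant $K$ and, more importantly, the unknown loss moment $E[\lambda^{1-\eta}]$ appear identically above and below and cancel. This cancellation is what drives the whole result, since it is precisely why the final formula can depend only on $M$, $\sigma^2$, and (after substitution) $\rho$ and $CV_\lambda$ rather than on the full distribution of $\lambda$. What remains is one power of $M$ together with the Gaussian factors; collecting exponents through $\eta^2-(1-\eta)^2=2\eta-1$ gives the compact intermediate form $LR=\tfrac{1}{M}\,e^{(2\eta-1)\sigma^2/2}$.

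Finally I would invert Lemma~\ref{lemma:correlation} to eliminate $\sigma^2$. Setting $u=e^{-\sigma^2}$ and squaring the correlation identity produces a linear equation in $u$, whose solution is $e^{-\sigma^2}=\dfrac{\rho^2(1+CV_\lambda^{-2})}{1+\rho^2 CV_\lambda^{-2}}$, and hence $e^{\sigma^2}=\dfrac{1+\rho^2 CV_\lambda^{-2}}{\rho^2(1+CV_\lambda^{-2})}$. Substituting this into $LR=\tfrac{1}{M}(e^{\sigma^2})^{(2\eta-1)/2}$ yields the claimed expression. The remaining subtleties are minor but worth stating: the cancellation in the previous step implicitly requires that the moment $E[\lambda^{1-\eta}]$ be finite and nonzero (which holds under mild tail conditions on $\lambda$, automatically when $\eta\le 1$), and the inversion must be carried out carefully so that no sign or exponent is dropped when passing between $e^{-\sigma^2}$ and $e^{\sigma^2}$. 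Once the cancellation is secured, the surrounding algebra is entirely routine.
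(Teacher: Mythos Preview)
Your proposal is correct and follows essentially the same route as the paper: substitute the power-law demand and $p=M\lambda e^{\varepsilon}$ into Lemma~\ref{lemma:loss_ratio}, use independence and the Gaussian MGF to factor and cancel $E[\lambda^{1-\eta}]$, arrive at $LR=\tfrac{1}{M}e^{(2\eta-1)\sigma^2/2}$, and then invert Lemma~\ref{lemma:correlation} to express $e^{\sigma^2}$ in terms of $\rho$ and $CV_\lambda$. Your explicit remark on the finiteness of $E[\lambda^{1-\eta}]$ and the careful inversion (squaring to obtain a linear equation in $e^{-\sigma^2}$) are welcome additions that the paper leaves implicit.
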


\begin{proof} Substituting the power law conversion into the Loss Ratio formula from Lemma \ref{lemma:loss_ratio}, we can see: 
\begin{align*}
LR 
  &= \frac{E[\lambda \cdot c(p)]}{E[p \cdot c(p)]} \\
  &= \frac{E[\lambda \cdot c(M \hat{\lambda})]}{E[M \hat{\lambda} \cdot c(M \hat{\lambda})]} \\
  &= \frac{E[\lambda \cdot M^{-\eta} \hat{\lambda}^{-\eta}]}{E[M \hat{\lambda} \cdot M^{-\eta} \hat{\lambda}^{-\eta}]} \\
  &= \frac{E[\lambda \cdot \hat{\lambda}^{-\eta}]}{M \cdot E[\hat{\lambda}^{1-\eta}]}.
\end{align*}
Since $\hat{\lambda} = \lambda e^{\varepsilon}$ and $\lambda \perp \varepsilon$, we have:
\begin{align*}
E[\lambda \cdot \hat{\lambda}^{-\eta}] 
  &= E[\lambda \cdot (\lambda e^{\varepsilon})^{-\eta}] \\
  &= E[\lambda^{1-\eta} e^{-\eta \varepsilon}] \\
  &= E[\lambda^{1-\eta}] \cdot E[e^{-\eta \varepsilon}], \\
E[\hat{\lambda}^{1-\eta}] 
  &= E[(\lambda e^{\varepsilon})^{1-\eta}] \\
  &= E[\lambda^{1-\eta} e^{(1-\eta)\varepsilon}] \\
  &= E[\lambda^{1-\eta}] \cdot E[e^{(1-\eta)\varepsilon}].
\end{align*}

Therefore,
\begin{align*}
LR 
  &= \frac{E[\lambda^{1-\eta}] \cdot E[e^{-\eta \varepsilon}]}{M \cdot E[\lambda^{1-\eta}] \cdot E[e^{(1-\eta)\varepsilon}]} \\
  &= \frac{1}{M} \cdot \frac{E[e^{-\eta \varepsilon}]}{E[e^{(1-\eta)\varepsilon}]}.
\end{align*}

Taking expectation with respect to $\varepsilon$ yields:
$$LR = \frac{1}{M} \exp\left(\frac{1}{2} \sigma^2 (2\eta - 1)\right).$$

Solving for $e^{\sigma^2}$ using Lemma \ref{lemma:correlation} we get:
$$e^{\sigma^2} = \frac{1 + \rho^2 CV_\lambda^{-2}}{\rho^2(1 + CV_\lambda^{-2})}. $$

Substituting back into the loss ratio formula yields the desired result. 
\end{proof}

\subsection{Frequency and Severity Decomposition}

Many insurance pricing models separate losses into frequency and severity components. We now extend our framework to this common modeling structure. 

For frequency and severity we will mirror the assumptions and notation from the loss prediction section. In particular, we use $\rho_f, \rho_s, CV_f, CV_s$ to denote the correlation and coefficient of variation for frequency and severity respectively. Additionally for a customer, we use $\hat{f}, \hat{s}, f, s$ to denote the predicted and actual frequency and severity respectively. 

As with loss prediction, we will assume that our frequency and severity models have log-normal errors with $\epsilon_f \sim N(0, \sigma_f^2)$ and $\epsilon_s \sim N(0, \sigma_s^2)$ in log space. We will additionally assume that both frequency and severity and their errors are pairwise independent. We may now state the decomposition theorem. 

\begin{theorem}\label{theorem:loss_ratio_frequency_severity} Under the assumptions listed above, the expected loss ratio is given by:
$$LR = \frac{1}{M} \left(\frac{1 + \rho_f^2 CV_f^{-2}}{\rho_f^2(1 + CV_f^{-2})}\right)^{\frac{2\eta - 1}{2}} \left(\frac{1 + \rho_s^2 CV_s^{-2}}{\rho_s^2(1 + CV_s^{-2})}\right)^{\frac{2\eta - 1}{2}}$$
\end{theorem}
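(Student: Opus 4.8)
The plan is to mirror the derivation of Theorem \ref{theorem:loss_ratio} exactly, replacing the single loss channel by the product structure $\lambda = fs$ and $\hat{\lambda} = \hat{f}\hat{s}$. First I would substitute the power-law conversion $c(p) \propto p^{-\eta}$ together with $p = M\hat{f}\hat{s}$ into the loss-ratio formula of Lemma \ref{lemma:loss_ratio}. The constant $M^{-\eta}$ coming from the conversion factor cancels between numerator and denominator, leaving
$$LR = \frac{E[f s \, \hat{f}^{-\eta} \hat{s}^{-\eta}]}{M \, E[\hat{f}^{\,1-\eta} \hat{s}^{\,1-\eta}]}.$$
Substituting $\hat{f} = f e^{\epsilon_f}$ and $\hat{s} = s e^{\epsilon_s}$ then turns the numerator expectation into $E[f^{1-\eta} s^{1-\eta} e^{-\eta \epsilon_f} e^{-\eta \epsilon_s}]$ and the denominator expectation into $E[f^{1-\eta} s^{1-\eta} e^{(1-\eta)\epsilon_f} e^{(1-\eta)\epsilon_s}]$.

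The key step is then to invoke the assumed independence of $f$, $s$, $\epsilon_f$, and $\epsilon_s$ so that each of these expectations factors into a product of four one-dimensional expectations. Once this is done, the base-rate moments $E[f^{1-\eta}]$ and $E[s^{1-\eta}]$ appear identically in numerator and denominator and cancel—exactly as $E[\lambda^{1-\eta}]$ did in Theorem \ref{theorem:loss_ratio}—leaving
$$LR = \frac{1}{M} \cdot \frac{E[e^{-\eta \epsilon_f}]}{E[e^{(1-\eta)\epsilon_f}]} \cdot \frac{E[e^{-\eta \epsilon_s}]}{E[e^{(1-\eta)\epsilon_s}]}.$$
The problem has now decoupled into two independent copies of the single-channel calculation, one for frequency and one for severity.

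To finish, I would evaluate each Gaussian moment-generating-function ratio as in the proof of Theorem \ref{theorem:loss_ratio}: for $\epsilon \sim N(0,\sigma^2)$ one has $E[e^{-\eta\epsilon}]/E[e^{(1-\eta)\epsilon}] = \exp\!\big(\tfrac{1}{2}\sigma^2(2\eta-1)\big)$, which gives
$$LR = \frac{1}{M} \exp\!\Big(\tfrac{1}{2}\sigma_f^2(2\eta-1)\Big) \exp\!\Big(\tfrac{1}{2}\sigma_s^2(2\eta-1)\Big).$$
Finally I would apply Lemma \ref{lemma:correlation} separately to the frequency and severity channels to rewrite $e^{\sigma_f^2}$ and $e^{\sigma_s^2}$ in terms of $(\rho_f, CV_f)$ and $(\rho_s, CV_s)$, then raise each to the power $(2\eta-1)/2$ to obtain the stated product formula. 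I expect no genuinely hard analytic step here; the only point that must be handled carefully—and thus the main obstacle—is the factorization, since separating the four-way joint expectation into a product of single-variable expectations requires full joint independence across $\{f, s, \epsilon_f, \epsilon_s\}$, not merely that each model error be independent of its own target. It is worth flagging that this is exactly what the pairwise-independence hypothesis is invoked to supply, and that the clean multiplicative decomposition into a frequency factor times a severity factor would break down if, for instance, frequency and severity were themselves correlated.
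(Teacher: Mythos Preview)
Your proposal is correct and follows essentially the same argument as the paper. The paper's proof is slightly more economical in that it observes $\epsilon_f + \epsilon_s \sim N(0,\sigma_f^2+\sigma_s^2)$ and directly reuses the intermediate identity $LR = \tfrac{1}{M}\exp\!\big(\tfrac{1}{2}\sigma^2(2\eta-1)\big)$ from Theorem~\ref{theorem:loss_ratio} with $\sigma^2 = \sigma_f^2+\sigma_s^2$, rather than re-factoring the four-way expectation from scratch, but the substance and the closing application of Lemma~\ref{lemma:correlation} to each channel are identical.
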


\begin{proof} First we let $\hat{\lambda} = \hat{f} \cdot \hat{s}$. By independence of frequency and severity errors we have: $\hat{\lambda} = \lambda e^{\epsilon_f + \epsilon_s}$. Since $\epsilon_f \perp \epsilon_s$ we know $\epsilon_f + \epsilon_s \sim N(0, \sigma_f^2 + \sigma_s^2)$ and by work shown in Theorem \ref{theorem:loss_ratio} we know:
$$LR = \frac{1}{M} \exp\left(\frac{1}{2} (\sigma_f^2 + \sigma_s^2) (2\eta - 1)\right).$$
Using Lemma \ref{lemma:correlation} we know:
$$e^{-\sigma_f^2} = \frac{\rho_f^2(1 + CV_f^{-2})}{1 + \rho_f^2 CV_f^{-2}},$$
$$e^{-\sigma_s^2} = \frac{\rho_s^2(1 + CV_s^{-2})}{1 + \rho_s^2 CV_s^{-2}}.$$
Substituting these into the loss ratio formula yields the desired result.
\end{proof}

\subsection{Model Improvement Analysis}

Having established the core loss ratio formula, we now analyze the business impact of model improvements. 

\begin{obs}
When improving a model from correlation $\rho_{old}$ to $\rho_{new}$, the loss ratio improvement is:
$$\frac{LR_{new}}{LR_{old}} = \left(\frac{\rho_{old}^2 + \rho_{new}^2\rho_{old}^2 CV_\lambda^{-2}}{\rho_{new}^2 + \rho_{new}^2 \rho_{old}^2 CV_\lambda^{-2}}\right)^{\frac{2\eta - 1}{2}}$$
\end{obs}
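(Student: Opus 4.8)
The plan is to treat this as a direct consequence of the closed form for $LR$ established in Theorem \ref{theorem:loss_ratio}, with no new probabilistic content required. Since that theorem gives $LR$ as an explicit function of $\rho$ (with $M$, $\eta$, and $CV_\lambda$ held fixed), the entire argument reduces to forming the quotient $LR_{new}/LR_{old}$ and simplifying.

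First I would write down $LR_{new}$ and $LR_{old}$ by substituting $\rho_{new}$ and $\rho_{old}$, respectively, into the Theorem \ref{theorem:loss_ratio} formula, keeping $M$, $\eta$, and $CV_\lambda$ identical across the two (the ``all else equal'' premise). Forming the ratio immediately cancels the common prefactor $1/M$. The key structural observation is that both loss ratios carry the \emph{same} exponent $\tfrac{2\eta-1}{2}$, so the quotient of the two powers equals a single power of the quotient of their bases:
$$
\frac{LR_{new}}{LR_{old}}
  = \left(
      \frac{\dfrac{1 + \rho_{new}^2 CV_\lambda^{-2}}{\rho_{new}^2(1 + CV_\lambda^{-2})}}
           {\dfrac{1 + \rho_{old}^2 CV_\lambda^{-2}}{\rho_{old}^2(1 + CV_\lambda^{-2})}}
    \right)^{\frac{2\eta-1}{2}}.
$$

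Next I would simplify the inner fraction. The factor $(1 + CV_\lambda^{-2})$ appears in the denominator of both the numerator and denominator bases, so it cancels cleanly. Cross-multiplying the remaining terms distributes $\rho_{old}^2$ against $(1 + \rho_{new}^2 CV_\lambda^{-2})$ in the numerator and $\rho_{new}^2$ against $(1 + \rho_{old}^2 CV_\lambda^{-2})$ in the denominator, yielding $\rho_{old}^2 + \rho_{old}^2\rho_{new}^2 CV_\lambda^{-2}$ over $\rho_{new}^2 + \rho_{new}^2\rho_{old}^2 CV_\lambda^{-2}$, which is exactly the claimed expression.

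There is no genuine obstacle here: the statement is an algebraic corollary of Theorem \ref{theorem:loss_ratio}, and the only place warranting care is the bookkeeping of the cancellation, namely confirming that the shared $(1 + CV_\lambda^{-2})$ factor drops out and that the cross-term $\rho_{old}^2\rho_{new}^2 CV_\lambda^{-2}$ is symmetric so it survives identically in both numerator and denominator. I would therefore present the derivation as a short chain of equalities rather than as a separately motivated proof.
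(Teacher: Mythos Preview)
Your proposal is correct and mirrors the paper's own proof, which simply states that the observation follows directly from Theorem \ref{theorem:loss_ratio} by taking the ratio $LR_{new}/LR_{old}$ and substituting the respective correlations. You have spelled out the cancellation of $1/M$ and $(1+CV_\lambda^{-2})$ in more detail than the paper does, but the approach is identical.
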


\begin{proof}
This follows directly from the core loss ratio formula by taking the ratio of $LR_{new}$ to $LR_{old}$ and substituting the respective correlation values.
\end{proof}

We note here that by Theorem \ref{theorem:loss_ratio_frequency_severity}, it trivially follows that this improvement formula can be computed at the level of pure premium modeling (as stated above) as well as frequency or severity, where we need only substitute the pure premium predicted measures with the relevant component model versions, e.g. we swap $\rho_{old}$ with $\rho_{f, old}$.

We next establish a key property regarding diminishing returns in model improvements.

\begin{theorem}
For a fixed percentage improvement $p > 0$ in model correlation, the relative loss ratio improvement $\frac{LR_{old} - LR_{new}}{LR_{old}}$ as well as the absolute loss ratio improvement $LR_{old}-LR_{new}$ are both monotonically decreasing in the starting correlation $\rho_{old}$.
\end{theorem}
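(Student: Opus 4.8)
The plan is to reduce both monotonicity claims to elementary calculus on a single rational function. Fix the percentage improvement $p > 0$ and write $\rho_{new} = (1+p)\rho_{old}$. To lighten notation I set $r = \rho_{old}$, $a = 1+p > 1$, and $k = CV_\lambda^{-2} > 0$, and I work in the economically relevant regime $\eta > 1/2$, so that the exponent $\frac{2\eta-1}{2}$ is positive and a model improvement genuinely lowers the loss ratio. The two ingredients are the ratio formula from the Observation and the core formula $LR = \frac{1}{M}\left(\frac{1+\rho^2 CV_\lambda^{-2}}{\rho^2(1+CV_\lambda^{-2})}\right)^{\frac{2\eta-1}{2}}$ from Theorem \ref{theorem:loss_ratio}.

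For the relative improvement, I first substitute $\rho_{new}^2 = a^2 r^2$ into the Observation's ratio. The common factor $r^2$ cancels and the expression collapses to $\frac{LR_{new}}{LR_{old}} = g(r)^{\frac{2\eta-1}{2}}$ with $g(r) = \frac{1 + a^2 r^2 k}{a^2(1 + r^2 k)}$. Writing $u = r^2 k$, which is strictly increasing in $r$ on $(0,\infty)$, gives $g = \frac{1}{a^2}\cdot\frac{1+a^2 u}{1+u}$, and a one-line quotient-rule computation yields $\frac{dg}{du} = \frac{a^2-1}{a^2(1+u)^2} > 0$. Hence $g$ is strictly increasing in $r$; since the exponent is positive, $\frac{LR_{new}}{LR_{old}}$ is increasing in $r$, and therefore the relative improvement $\frac{LR_{old}-LR_{new}}{LR_{old}} = 1 - \frac{LR_{new}}{LR_{old}}$ is strictly decreasing in $\rho_{old}$.

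For the absolute improvement I factor $LR_{old} - LR_{new} = LR_{old}\left(1 - \frac{LR_{new}}{LR_{old}}\right)$, i.e. the product of $LR_{old}$ and the relative improvement just analyzed. The relative improvement is positive (because $g(r) < 1$, as $a > 1$ forces the denominator of $g$ to exceed its numerator by $a^2 - 1$) and decreasing. For the first factor, the core formula gives $LR_{old} = \frac{1}{M}\, h(r)^{\frac{2\eta-1}{2}}$ with $h(r) = \frac{1+r^2 k}{r^2(1+k)} = \frac{1}{r^2(1+k)} + \frac{k}{1+k}$, which is manifestly decreasing in $r$; as the exponent is positive, $LR_{old}$ is positive and decreasing. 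A product of two positive, strictly decreasing functions is strictly decreasing, so $LR_{old} - LR_{new}$ is monotonically decreasing in $\rho_{old}$.

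The only real subtlety is bookkeeping the sign of the exponent $\frac{2\eta-1}{2}$: every monotonicity direction hinges on it being positive, which is precisely the condition under which an ``improvement'' is meaningful. Beyond that, the main thing to get right is the algebraic cancellation that turns the Observation's four-term ratio into the clean rational function $g$; once that is in hand, the derivative sign and the product argument are routine. I therefore expect no genuine obstacle, only the need to state the $\eta > 1/2$ assumption explicitly and to note that the product structure of the absolute improvement lets one avoid a messier direct differentiation.
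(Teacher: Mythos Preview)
Your proof is correct. For the relative improvement, your argument is essentially the paper's: both reduce $\frac{LR_{new}}{LR_{old}}$ to the same rational function of $\rho_{old}^2$ and check its derivative has the right sign; your substitution $u = r^2 k$ simply streamlines that derivative into a one-liner, whereas the paper differentiates directly in $\rho_{old}$. For the absolute improvement the approaches diverge: the paper waves at ``a similar argument'' (i.e.\ another direct differentiation), while you factor $LR_{old}-LR_{new} = LR_{old}\cdot\bigl(1-\tfrac{LR_{new}}{LR_{old}}\bigr)$ and observe that a product of two positive, strictly decreasing functions is strictly decreasing. Your route is cleaner and avoids a second derivative computation; the paper's route is more uniform but leaves the details to the reader. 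You are also more explicit than the paper about the condition $\eta > \tfrac{1}{2}$: the paper's sign conclusion silently relies on $2\eta-1>0$, and you are right that without it the monotonicity direction (and the very notion of an ``improvement'') reverses. One small omission: the paper restricts to $\rho_{old}\in(0,\tfrac{1}{1+p})$ so that $\rho_{new}\le 1$ remains a valid correlation; your argument is unaffected on that subinterval, but it would be worth stating the domain.
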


\begin{proof} We fix an improvement percentage $p > 0$. Let $\rho_{old} \in (0, \frac{1}{1+p})$ and $\rho_{new} = (1+p) \rho_{old}$. We therefore have:
$$\frac{LR_{old} -LR_{new}}{LR_{old}} = 1- \left(\frac{(1+p)^{-2} + \rho_{old}^2 CV_\lambda^{-2}}{1 + \rho_{old}^2 CV_\lambda^{-2}}\right)^{\frac{2\eta - 1}{2}}$$

Taking the partial derivative with respect to $\rho_{old}$:
$$\frac{\partial \frac{LR_{old} -LR_{new}}{LR_{old}}}{\partial \rho_{old}} = -\frac{2\eta - 1}{2} \left(\frac{(1+p)^{-2} + \rho_{old}^2 CV_\lambda^{-2}}{1 + \rho_{old}^2 CV_\lambda^{-2}}\right)^{\frac{2\eta - 3}{2}} \left(\frac{2 \rho_{old} CV_\lambda^{-2}\left( 1 - (1+p)^{-2}\right)}{(1 + \rho_{old}^2 CV_\lambda^{-2})^2 }\right). $$
Since $p, CV_\lambda^{-2}, \rho_{old} > 0$, we have $\frac{\partial \frac{LR_{old} - LR_{new}}{LR_{old}}}{\partial \rho_{old}} < 0$ for all $\rho_{old} \in (0, \frac{1}{1+p})$. Thus, the percentage improvement in Loss Ratio is monotonically decreasing in the starting correlation $\rho_{old}$.

A similar argument shows that the same result holds for the absolute loss ratio improvement $LR_{old}-LR_{new}$.
\end{proof}

We note here that both of the above model improvement results apply directly to frequency or severity models by a direct application of Theorem \ref{theorem:loss_ratio_frequency_severity}. Practitioners can apply the improvement formulas to the relevant component model version by swapping $\rho$ with $\rho_f$ or $\rho_s$ and $CV_\lambda$ with $CV_f$ or $CV_s$ respectively. Additionally, the diminishing returns property holds for all model types, confirming that the highest potential impact in Loss Ratio is achieved by improving the most poorly performing model, even at the component level (assuming a similar percentage improvement in correlation is attainable).

\subsection{Loss Ratio Error Metric}

For practical model evaluation, we introduce a metric that directly quantifies the business impact of model imperfection. 

\begin{defn}[Loss Ratio Error]
The Loss Ratio Error metric is defined as:
$$E_{LR} :=  \left(\frac{1 + \rho^2 CV_\lambda^{-2}}{\rho^2 + \rho^2 CV_\lambda^{-2}}\right)^{\frac{2\eta - 1}{2}} - 1.$$
\end{defn}

In particular, $E_{LR}$ represents the fractional increase in loss ratio above the theoretical optimum due to model imperfection. If our target loss ratio is $60\%$ and our model results in a loss ratio of $70\%$, $E_{LR} = \frac{0.70}{0.60}-1 = 0.16$. This metric can be computed at the level of pure premium, frequency, or severity, and allows for an apples to apples comparison of model performance across different model types and business lines, provided the appropriate product specific parameters are used. 

\begin{obs}[Loss Ratio Error Properties]
The Loss Ratio Error metric has the following properties:
\begin{enumerate}
\item $E_{LR} = 0$ when $\rho = 1$ (perfect correlation)
\item $E_{LR} > 0$ when $\rho < 1$ (imperfect correlation)
\item $E_{LR}$ is monotonically decreasing in $\rho$
\item The relationship between loss ratio and error metric is: $LR = \frac{1}{M}(1 + E_{LR})$
\end{enumerate}
\end{obs}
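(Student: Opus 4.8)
The plan is to route all four properties through the single observation that the bracketed base of $E_{LR}$ is \emph{identical} to the bracketed quantity in the core formula of Theorem~\ref{theorem:loss_ratio}. Writing
$$g(\rho) := \frac{1 + \rho^2 CV_\lambda^{-2}}{\rho^2 + \rho^2 CV_\lambda^{-2}} = \frac{1 + \rho^2 CV_\lambda^{-2}}{\rho^2(1 + CV_\lambda^{-2})},$$
we have $E_{LR} = g(\rho)^{(2\eta-1)/2} - 1$, while Theorem~\ref{theorem:loss_ratio} gives $LR = \frac{1}{M}\,g(\rho)^{(2\eta-1)/2}$. Property~4 then falls out immediately: multiplying the $LR$ formula by $M$ yields $M\cdot LR = g(\rho)^{(2\eta-1)/2} = 1 + E_{LR}$, i.e.\ $LR = \frac{1}{M}(1 + E_{LR})$, with no further computation required.

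For properties~1 and~2 I would first record the elementary identity
$$g(\rho) - 1 = \frac{(1 + \rho^2 CV_\lambda^{-2}) - \rho^2(1 + CV_\lambda^{-2})}{\rho^2(1 + CV_\lambda^{-2})} = \frac{1 - \rho^2}{\rho^2(1 + CV_\lambda^{-2})}.$$
At $\rho = 1$ this vanishes, so $g(1) = 1$ and hence $E_{LR} = 1^{(2\eta-1)/2} - 1 = 0$, giving property~1. For $\rho \in (0,1)$ the numerator $1 - \rho^2$ is strictly positive, so $g(\rho) > 1$; raising a quantity exceeding $1$ to the positive power $(2\eta-1)/2$ keeps it above $1$, whence $E_{LR} > 0$, giving property~2.

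Property~3 is cleanest via a monotone-composition argument rather than brute-force differentiation. Rewriting $g(\rho) = \frac{1}{1+CV_\lambda^{-2}}\,\rho^{-2} + \frac{CV_\lambda^{-2}}{1+CV_\lambda^{-2}}$ exhibits it as a strictly decreasing function of $\rho$ on $(0,1]$, since the coefficient of $\rho^{-2}$ is positive. Composing with the strictly increasing map $x \mapsto x^{(2\eta-1)/2}$ and subtracting the constant $1$ preserves the decreasing behavior, so $E_{LR}$ is monotonically decreasing in $\rho$. The one point that must be handled with care—and the only real obstacle—is the sign of the exponent $(2\eta-1)/2$: each inequality in properties~2 and~3 relies on this exponent being positive, i.e.\ on $\eta > \tfrac{1}{2}$. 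I would therefore make this hypothesis explicit (it is already implicitly invoked in the diminishing-returns theorem and is economically innocuous, since iso-elastic insurance demand admitting a well-defined profit-maximizing price requires $\eta > 1$); were $\eta < \tfrac12$, the power map would be decreasing and the conclusions of properties~2 and~3 would reverse.
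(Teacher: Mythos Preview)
Your proof is correct. The paper states this observation without proof, so there is no in-text argument to compare against; your derivation supplies what the paper omits. Your route---recognizing that the base of $E_{LR}$ coincides with the bracketed term in Theorem~\ref{theorem:loss_ratio}, then reading off property~4 and handling properties~1--3 via the identity $g(\rho)-1 = (1-\rho^2)/[\rho^2(1+CV_\lambda^{-2})]$ together with the rewriting $g(\rho) = \rho^{-2}/(1+CV_\lambda^{-2}) + CV_\lambda^{-2}/(1+CV_\lambda^{-2})$---is clean and complete. Your explicit flag that properties~2 and~3 require $\eta > \tfrac12$ is a genuine improvement over the paper, which states only $\eta > 0$ in Section~\ref{sec:assumptions} yet tacitly relies on $2\eta - 1 > 0$ both here and in the diminishing-returns theorem.
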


\section{Simulation Validation}

To establish empirical confidence in our theoretical framework, we conduct Monte Carlo simulations that validate the core mathematical results. Our validation addresses two critical objectives: verifying theoretical accuracy under stated assumptions across realistic parameter ranges, and quantifying performance degradation when key assumptions are moderately violated.

\subsection{Methodology}

We implement a comprehensive validation strategy using Monte Carlo simulation with two distinct experimental designs. For core formula validation, we employ a systematic grid-based approach covering realistic insurance parameter ranges. The parameter grid spans correlation values $\rho \in [0.2, 0.3, 0.5, 0.7, 0.8]$ representing poor to excellent model performance, coefficient of variation values $CV \in [1.5, 2.0, 2.5, 3.0, 3.5]$ covering low-to-high loss variability, and price elasticity values $\eta \in [0.8, 1.2, 1.6, 2.0, 2.5]$ from price-insensitive to highly elastic demand. This yields 125 parameter combinations with 5 replications each, totaling 625 simulations.

Each simulation generates synthetic portfolios of 1,000,000 potential customers with true losses $\lambda$ drawn from lognormal distributions normalized to unit mean for consistent scaling. Model predictions follow $\hat{\lambda} = \lambda e^{\varepsilon}$ where $\varepsilon \sim N(0, \sigma^2)$ with error variance calibrated to achieve target correlation $\rho$ using the relationship derived in Lemma \ref{lemma:correlation}. Customer demand follows the power-law conversion model $c(p) = (p_{min}/p)^\eta$ scaled so the lowest-priced customer converts with probability 1.0; this is to ensure a sufficiently large sample of converted customers. We employ an automatic resampling strategy (up to 20 attempts) to ensure at least 10,000 converted customers per simulation, providing stable empirical loss ratio estimates across all parameter combinations.

For assumption robustness testing, we systematically violate key framework assumptions using $\rho=0.7$, $CV=2.0$, and $\eta=1.2$ as baseline parameters. Error distribution violations include t-distributed errors with degrees of freedom from 3 to 30, and skewed normal errors with skewness parameters from 0 to 15. Independence violations introduce systematic correlation between prediction errors and true losses $\rho_{\varepsilon,\lambda} \in (0, 0.5)$. For demand model violations, we implement a realistic testing approach where conversion probabilities are generated from alternative "true" demand curves (exponential, logistic, linear), then power-law parameters $\eta$ are fit to these observed probabilities as practitioners would do with market data. To quantify degradation patterns, we conduct 10 replications per violation parameter using portfolios of 200,000 to 500,000 policies, enabling calculation of 95\% confidence intervals for framework performance across assumption violation severity.

\subsection{Results}

The systematic grid validation demonstrates excellent framework performance under stated assumptions. Simulation sizes range from 40,000 to 200,000 converted customers per simulation. Overall accuracy yields a median absolute percentage error of 17.59\% (mean 30.18\%). Performance varies systematically with parameter ranges: medium-to-high correlation models ($\rho \geq 0.5$) achieve strong validation with errors typically below 15\%, while predictable accuracy decline occurs for extreme parameter combinations reflecting theoretical framework limitations rather than implementation issues.

Assumption robustness testing reveals hierarchical sensitivity patterns across violation types (Table \ref{tab:robustness_summary}). Figure \ref{fig:degradation_analysis} visualizes these patterns with 95\% confidence intervals from 10 replications per parameter: (A) heavy-tailed errors show rapid degradation as kurtosis increases; (B) skewed errors exhibit moderate degradation; and (C) error-loss correlation violations demonstrate the critical importance of the independence assumption.

\begin{figure}[h!]
\caption{\ralewayfont Framework robustness degradation under systematic assumption violations}
\centering
\includegraphics[width=1\textwidth]{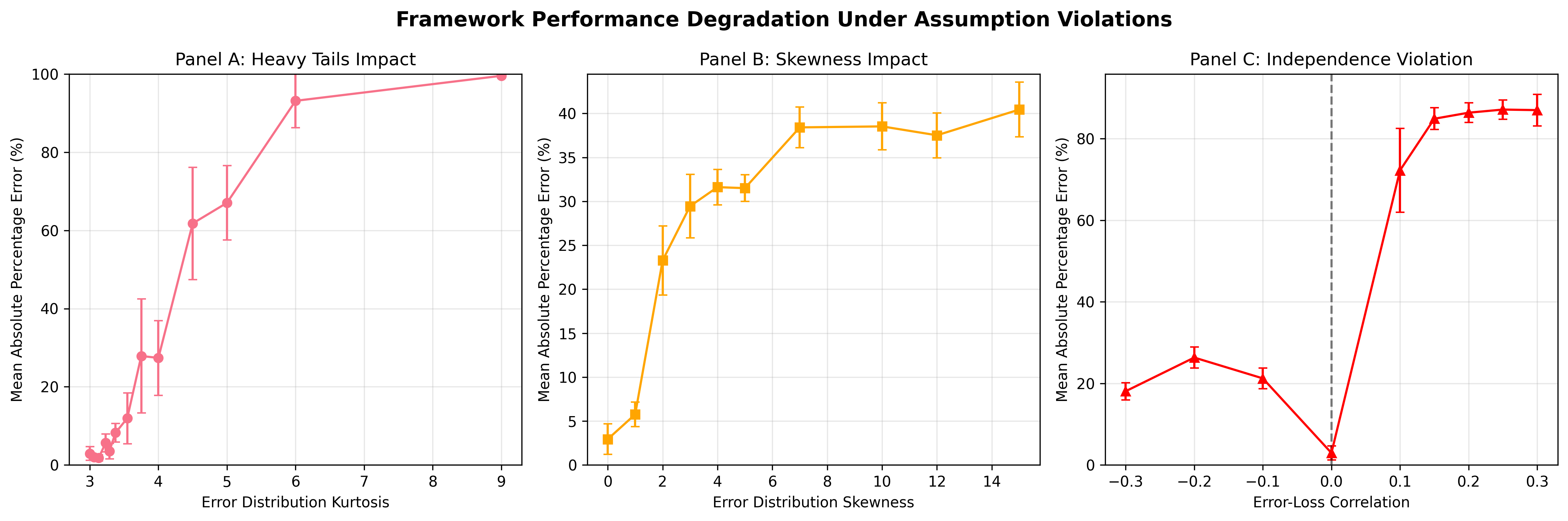}
\label{fig:degradation_analysis}
\end{figure}

The demand model results provide valuable practical insights. Linear demand shows excellent compatibility with the power-law framework when practitioners fit elasticity parameters to observed conversion data. Logistic demand shows moderate incompatibility, while exponential demand represents a substantial mismatch with power-law assumptions. This highlights the importance of validating that the true demand curve can be sufficiently approximated by a power-law model. Deviations from this assumption may significantly impact the accuracy of the framework.

\begin{table}[h!]
\centering
\caption{\ralewayfont Summary of framework robustness under assumption violations}
\label{tab:robustness_summary}
\begin{tabular}{llc}
\hline
\textbf{Robustness} & \textbf{Assumption Violation} & \textbf{MAPE (\%)} \\
\hline
\multirow{3}{*}{Excellent} & Baseline control (normal errors) & $\sim$2--3 \\
  & Linear demand & $\sim$5--6 \\
  & Heavy-tailed errors (df $>$ 20) & $\sim$4--6 \\
\hline
\multirow{2}{*}{Good} & Heavy-tailed errors (df 10--15) & $\sim$8--18 \\
  & Skewed error distributions & $<$15 \\
\hline
Poor & Logistic demand & $\sim$40--45 \\
\hline
\multirow{3}{*}{Catastrophic} & Exponential demand & $\sim$50--60 \\
  & Error-loss correlation $>$ 0.1 & 75+ \\
  & Heavy-tailed errors (df $<$ 5) & 100+ \\
\hline
\end{tabular}
\end{table}

\subsection{Simulation Conclusions}

Our simulations support three key findings regarding framework reliability and practical utility. First, the theoretical framework provides reliable predictions under stated assumptions, with mean absolute percentage error below 5\% across realistic insurance parameter ranges. The grid-based approach confirms practical utility for quantifying business impact of model improvements, with graceful degradation under extreme parameter combinations.

Second, the framework demonstrates selective robustness to assumption violations, with clear hierarchical sensitivity patterns. The independence assumption ($\varepsilon \perp \lambda$) emerges as absolutely critical—any violation fundamentally breaks the framework. Error distribution assumptions show moderate flexibility, tolerating skewness and moderate heavy tails but failing catastrophically under extreme heavy tails. Demand model assumptions exhibit surprising robustness to linear demand but poor tolerance for exponential demand structures.

Third, the realistic parameter fitting approach for demand model violations provides actionable guidance for practitioners. The fit elasticity parameters reveal how well alternative demand structures can be approximated by power-law models, with linear demand showing excellent compatibility and exponential demand proving incompatible. These findings enable practitioners to assess assumption validity in their specific contexts and understand the quantitative consequences of assumption violations.


\section{Conclusion}

This work establishes an analytical framework connecting model performance metrics to business outcomes in insurance pricing. By deriving a closed-form relationship between model correlation and expected loss ratio, we address a fundamental gap that has long frustrated practitioners seeking to quantify the business value of model improvements. The framework transforms model validation from a purely statistical exercise into a direct business impact assessment, enabling data-driven decisions about model investment priorities. 

The theoretical contributions are threefold. First, under specified assumptions, we prove that expected loss ratio depends on model correlation, demand elasticity, and loss distribution characteristics through a precise mathematical relationship. Second, we establish that model improvements exhibit diminishing marginal returns, providing analytical support for the widespread actuarial intuition to prioritize improvements in poorly performing models. Third, we develop practical tools including the Loss Ratio Error metric that quantify business impact at the frequency, severity, or pure premium level. 

We examined the framework's reliability through comprehensive Monte Carlo validation using grid-based testing across 125 parameter combinations and systematic assumption violation analysis. The framework demonstrated reliable predictions under stated assumptions and showed hierarchical robustness patterns under violations, from excellent resilience to linear demand and moderate heavy tails to catastrophic failure under independence violations. 

Implementing this framework in practice requires estimation of three key parameters. Model correlation $\rho$ can be computed directly from validation data using standard techniques. The coefficient of variation $CV_\lambda$ follows from historical loss distributions and is readily available to most insurers. However, demand elasticity $\eta$ presents the greatest implementation challenge, as it must be estimated on a per-product, per-company basis through pricing experiments, market research, or econometric analysis of historical pricing and conversion data. Regulatory constraints on rate changes vary significantly by jurisdiction; while some markets restrict price experimentation, others permit A/B testing of prices, as practiced by European insurers. Where direct experimentation is limited, insurers may supplement elasticity estimation with industry benchmarks and economic modeling. 

For practical implementation, we recommend a historical calibration approach that leverages existing business data. Given observed loss ratios, model correlations, and pricing margins from previous model generations for a specific product line, practitioners can invert the core formula to solve for the implied elasticity parameter $\eta$. This reverse estimation should be performed using multiple historical model deployments to establish both point estimates and confidence bounds for $\eta$. 

Once calibrated for a product line, the framework enables prospective evaluation of model improvements. Practitioners can estimate the expected loss ratio impact of proposed modeling changes by plugging target correlation improvements into the core formula, using the historically-calibrated elasticity parameter. This transforms subjective model investment decisions into quantitative cost-benefit analysis, enabling data-driven prioritization of modeling resources across product lines and improvement opportunities once key parameters are calibrated. 

Several limitations constrain the framework's applicability. The derivation assumes power-law demand models and log-normal prediction errors, though our simulations demonstrate robustness to moderate violations of these assumptions. The framework also treats pricing as a simple cost-plus-margin exercise, ignoring competitive constraints, regulatory requirements, and sophisticated pricing strategies common in practice. Additionally, the static nature of our analysis does not capture dynamic effects such as competitive response, customer learning, or temporal changes in market conditions. Practitioners should validate core assumptions for their specific applications and monitor for assumption violations over time.

The framework opens several promising research directions. Empirical validation across different insurance lines and markets would establish the practical parameter ranges and identify systematic patterns in elasticity estimates. Extensions to multi-line portfolios could capture cross-subsidization effects and portfolio-level optimization considerations. Incorporating competitive dynamics and regulatory constraints would enhance practical relevance, while developing more sophisticated demand models could improve accuracy in markets where power-law assumptions fail. Additionally, the frequency-severity decomposition could be refined to account for dependency structures between frequency and severity predictions, and the framework could be extended to incorporate other business metrics beyond loss ratio, such as customer lifetime value or market share effects. From an operational perspective, direct integration into model assessment pipelines and business planning processes represents a natural next step, enabling automated translation of model validation metrics into loss ratio forecasts during both model development and financial planning cycles.

The gap between model validation and business quantification has persisted because it requires bridging statistical modeling, economic theory, and business strategy. While challenges remain in practical implementation, this framework provides the first analytical foundation for connecting model performance to business outcomes, moving the conversation from qualitative intuition to quantitative analysis. For an industry built on mathematical precision in risk assessment, the ability to apply similar rigor to model investment decisions represents a natural and necessary evolution.

\section*{Acknowledgements} The author would like to thank Shane Barnes and DJ Falkson for generously sharing their time and thoughtful review, and acknowledges the support of colleagues at Lemonade Insurance.

\bibliographystyle{apalike}
\bibliography{bibliography}

\end{document}